\documentclass{article}
\usepackage{amsmath,amssymb,amsbsy,amsthm,setspace}

\let\epsilon\varepsilon
\let\phi\varphi

\let\epsilon\varepsilon

\newtheorem*{lemma*}{Lemma}
\newtheorem{theorem}{Theorem}

\newtheorem{corollary}{Corollary}

\doublespacing

\begin{document}



\title{The Vernam cipher is
 robust to small deviations from randomness}

 \author{Boris Ryabko\\ 
Siberian State University of Telecommunications and Information Sciences,\\
Institute of Computational Technology of Siberian Branch of Russian \\ Academy of Science,  Novosibirsk, Russia\\
   boris@ryabko.net}


\date{}

\maketitle


\begin{abstract}

The Vernam cipher (or one-time pad)  has played an important rule in cryptography because it is a perfect secrecy system. 
For example, if an English text (presented in binary system) $X_1 X_2 ... $ is enciphered 
  according to the 
formula $Z_i = (X_i + Y_i) \mod 2 $, where $Y_1 Y_2 ...$ is a key sequence generated by the Bernoulli source
with equal probabilities of 0 and 1, anyone who knows   $Z_1 Z_2  ... $ has no information about $X_1 X_2 ... $ 
without the knowledge of the key
$Y_1 Y_2 ...$.
(The best strategy is to guess $X_1 X_2 ... $ not paying attention to $Z_1 Z_2  ... $.)

But what should one say about secrecy of  an analogous method where the key sequence $Y_1 Y_2 ...$ is generated by the Bernoulli 
source with a small bias, say, $P(0) = 0.49, $ $ P(1) = 0.51$? 
To the best of our knowledge, there are no theoretical estimates for the secrecy of such a  system,
as well as for the general case where $X_1 X_2 ... $ (the plaintext) and  key sequence are described by stationary ergodic
processes. 
We consider the running-key ciphers where the plaintext  and the key are generated by stationary ergodic sources
and show how to estimate the secrecy of such systems. In particular, it is shown that, in a certain sense, the Vernam cipher
is robust to small deviations from randomness.  
\end{abstract}

{\em Keywords:} 
 running-key cipher, Vernam cipher, Shannon entropy, unconditional secrecy.

\section{Introduction}
We consider the classical problem of transmitting secret messages from Alice (a sender) to Bob (a receiver) 
via an open channel which can be accessed by Eve (an adversary). It is supposed that Alice and Bob (and nobody
else) know a so-called key $K$ which is a word in a certain alphabet.
Before transmitting a message   Alice  encrypts it. In his turn, Bob,
after having received the encrypted message (ciphertext), decrypts it to recover the initial text (plaintext).

We consider so-called running-key ciphers where the plaintext $X_1 ... X_t$, the key sequence $Y_1 ... Y_t$ 
and ciphertext $Z_1 ...  Z_t$ belong to one alphabet $A$ (without loss of generality we suppose that 
$A = \{0,1, ... , n - 1\}$, where $n \ge 2$. The $i-th$ letter of the ciphertext is defined by 
$
Z_i = c(X_i,Y_i), \, $ $ \, i = 1, ... , t
$, whereas the deciphering rule is by  $
X_i = d(Z_i,Y_i), \, $ $ \, i = 1, ... , t
$, i.e. $d(e(X_i,Y_i), Y_i) = X_i$. Here $c$ and $d$ are functions called coder and decoder, correspondingly. 
Quite often the following particular formula are used
\begin{equation}\label{v1}
Z_i = (X_i + Y_i) \mod n \, , \, \, \, \,  \, X_i = (Z_i - Y_i) \mod n \, , 
\end{equation}
 i.e. $ c(X_i,Y_i) = (X_i + Y_i) \mod n ,  $ $\, \,  d(Z_i,Y_i) = (Z_i - Y_i) \mod n  $.
 In a case of two-letter alphabet (\ref{v1}) can be  presented as follows:
\begin{equation}\label{v2}
Z_i = (X_i \oplus Y_i)   \, , \, \, X_i = (Z_i \oplus Y_i)  
\end{equation}
where $a \oplus b \, = (a+b) \mod 2$.

 It is important to note that we consider a so-called unconditional (or information-theoretical)  security.
That is, the cipher is secure even when Eve has unlimited computing power. Roughly speaking, if the unconditionally
secure cipher is used,  Eve has many 
highly probable possible versions of a plaintext and, hence,  cannot choose the real plaintext from them.  
   The following informal consideration helps to understand the main idea of an approach considered later: 
   Let there be two  unconditionally
secure ciphers
which can be applied to one plaintext. 
Imagine, that for the first cipher Eve has 10 equiprobable possible deciphering texts
whose  overall probability equals 0.999, 
whereas for the second cipher  there are 100
equiprobable deciphering texts with the same overall probability. Obviously, the second system is more preferable,
because the uncertainty of Eve is much larger for the second system.
   This informal consideration  is quite popular in cryptography \cite{RW,RF,S} and 
 we will estimate the security of a cipher by the logarithm of the total number of 
   (almost) equiprobable possible deciphering texts whose overall probability is close to 1.

 The running-key cipher (\ref{v1}) is called the Vernam cipher  (or one-time pad) if  any word 
 $k_1 ... k_t$, $k_i \in A$, is used as the key word with probability $n^{-t}$,
 i.e. $ P(Y_1 ... Y_t = k_1 ... k_t) = n^{-t}$ for any 
 $ k_1 ... k_t \in A^t$.  In other words, we can say that the key letters are 
 independent and identically distributed (i.i.d.) and probabilities of all letters  are equal. 
 
The Vernam cipher is one of the most popular among the unconditionally
secure   running-key ciphers. It has played an important rule in cryptography, especially since 
C.Shannon proved that this cipher is perfectly secure \cite{S}. 
That is, the ciphertext  $Z_1 ...  Z_t$ gives absolutely no additional information about the plaintext
 $X_1 ... X_t$. This fact can be interpreted as follows:
   a priori probability of a plaintext  is the same as a posteriori probability of 
a plaintext   given the corresponding ciphertext \cite{S}. Using Shannon entropy, 
it can be  expressed by the following equation  $h(X_1 ... X_t) = h(X_1 ... X_t|Z_1 ...  Z_t)$, 
 where $h(X_1 ... X_t)$ and $h(X_1 ... X_t|Z_1 ...  Z_t)$ are the entropy of the plaintext and 
  the conditional entropy of the plaintext given the ciphertext $Z_1 ...  Z_t$, correspondingly
  (they will be defined below).   
   For example, if one uses the Vernam cipher (\ref{v2}) to cipher an  English text presented, say,  
   in standard 7-bit binary ASCII, Eve can  try to guess the plaintext not paying attention 
   on the ciphertext.
   
   It was shown by Shannon that  any perfectly secure system 
    must use the secret key
whose length equals the plaintext length. 
That is why many authors considered  
 the problem of security of systems where either the length of the key
  or its entropy  is less than the length (or entropy) of the plaintext,
  see,  for example, \cite{ahl,he,m1,RW,RF,S} and reviews therein. 
  But, in spite of numerical 
  papers, some seemingly natural  questions  are  still open. 
  For example, what can we say about secrecy of the system  (\ref{v2}) 
 where it is applied to an English text (in binary 
presentation) and the key sequence is generated by the Bernoulli source with a small 
bias, say, $P(Y_i = 0) = 0.51, $ $P(Y_i = 0) = 0.49$. (Informally, it is ``almost'' Vernam cipher).
To the best of our knowledge, there are no theoretical estimates for the security of such a  system,
as well as for the general case where the plaintext and  key are described as stationary ergodic
processes. 

  In this paper we consider  this problem for 
 running-key ciphers (\ref{v1})  in a case where the plaintext $X_1 ... X_t$ and 
 the key sequence $Y_1 ... Y_t$ are independently generated by stationary ergodic sources
 and the entropy of the key can be less than maximally possible value $\log n$ per letter (here and below $\log \equiv \log_2 $).  
 The goal of the paper is to find simple estimates 
   of secrecy for such systems.  We would like to emphasize that the unconditional secrecy is meant, i.e. it is supposed that
   Eve has unlimited computational power and unlimited time for computations.

   It is worth noting that Shannon in his famous paper \cite{S} mentioned that 
  the problem of deciphering of a ciphertext and the problem of signal denoising
  are very close from mathematical point of view. In this paper we use some results obtained  in 
  \cite{rr11} considering the problem of denoising.

\section{Preliminaries} 

We consider the case where the plaintext $X = X_1,X_2,\dots$ and 
and the key sequence  $Y_1,Y_2,\dots$  are independently generated by 
stationary  ergodic processes with the finite alphabets $A = \{0,1, ... , n - 1\}$,  $n \ge 2$.

The $m-$order Shannon entropy and the limit Shannon entropy are defined as follows:
\begin{equation}\label{ent}
 h_m(X) = -  \frac{1}{m+1}\sum_{u \in A^{m+1} } P_X(u) \log P_X(u) , \ \  
h(X) = \lim_{m \rightarrow \infty} h_m(X)
\end{equation}
where 
$m \ge 0$ , $P_X(u)$ is the probability that $X_1 X_2 ... X_{|u|} $ $ = u$ (this limit always exists, see, 
for ex., \cite{Cover:06, Gallager:68}). 
Introduce also the  conditional Shannon entropy 
\begin{equation}\label{entc}
h_m(X|Z) = h_m(X,Z) - h_m(Z), \, \, h(X|Z) = \lim_{m \rightarrow \infty} h_m(X|Z)
\end{equation}

The  Shannon-McMillan-Breiman theorem  
for  conditional
entropies  can be stated as follows. 
\begin{theorem}[Shannon-McMillan-Breiman]\label{th:smb}
  $\forall \varepsilon >0, \forall\delta > 0$, for almost all \\ $Z_1,Z_2,\dots$ 
  there exists $n'$ such that 
  if $n > n'$  then 
\begin{equation}\label{smb}
   P\left\{ \left| -  \frac{1}{n} \log P(X_1..X_n|Z_1..Z_n) 
 - h(X|Z) \right| < \epsilon \right\} \ge 1-\delta. 
\end{equation}
\end{theorem}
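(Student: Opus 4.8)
The plan is to reduce the statement to the ordinary (unconditional) Shannon--McMillan--Breiman theorem applied to two auxiliary processes, and then to pass from almost sure convergence under the joint law to the conditional-in-probability form asserted here. First I would write the conditional probability as a ratio of a joint and a marginal probability,
\begin{equation*}
 -\frac{1}{n}\log P(X_1..X_n|Z_1..Z_n)
 = -\frac{1}{n}\log P(X_1..X_n, Z_1..Z_n) - \left(-\frac{1}{n}\log P(Z_1..Z_n)\right).
\end{equation*}
The pair process $\{(X_i,Z_i)\}$ is stationary over the product alphabet $A\times A$, its $Z$-coordinate being the marginal process $\{Z_i\}$ (a factor, hence also stationary, and ergodic if the pair is). Under ergodicity of the pair, the ordinary Shannon--McMillan--Breiman theorem (the AEP) applies to each term on the right-hand side: the first converges almost surely to $h(X,Z)$ and the second to $h(Z)$.

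Combining the two limits, and recalling the definition $h(X|Z)=h(X,Z)-h(Z)$ implicit in (\ref{entc}), gives
\begin{equation*}
 -\frac{1}{n}\log P(X_1..X_n|Z_1..Z_n) \longrightarrow h(X|Z)
\end{equation*}
almost surely with respect to the joint law of $\{(X_i,Z_i)\}$. It remains to translate this joint almost sure statement into the form required here, where the realization of $Z$ is fixed and the probability is taken over $X$. This is a disintegration (Fubini) argument: writing the joint law as the average over $z$ of the conditional laws $P_{X|Z=z}$, the joint a.s. convergence forces, for almost every $z$, that the same sequence converges to $h(X|Z)$ for $P_{X|Z=z}$-almost every $x$. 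Almost sure convergence under $P_{X|Z=z}$ in turn implies convergence in probability under $P_{X|Z=z}$, which is exactly (\ref{smb}).

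I expect the main obstacle to be the ergodicity of the joint pair process $\{(X_i,Z_i)\}$: the unconditional theorem I invoke requires it, yet a product of two independent ergodic sources need not be ergodic (the plaintext and the key can share nontrivial spectral components). Since for the cipher (\ref{v1}) the map $(x,y)\mapsto(x,c(x,y))$ is a coordinate-wise bijection commuting with the shift, the pair $\{(X_i,Z_i)\}$ is isomorphic to the plaintext--key pair $\{(X_i,Y_i)\}$, so ergodicity of the former is equivalent to ergodicity of the latter. I would therefore either adopt joint ergodicity as a standing hypothesis, or, failing that, appeal to the ergodic decomposition, identify the a.s. limit on each component, and verify that (\ref{smb}) still holds with the constant $h(X|Z)$. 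The remaining steps — the ratio identity and the disintegration — are routine.
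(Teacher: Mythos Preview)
The paper does not prove this statement at all: immediately after the theorem it simply writes ``The proof can be found in \cite{Cover:06,Gallager:68}.'' So there is no in-paper argument to compare against; the conditional Shannon--McMillan--Breiman theorem is being quoted as a known result.

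Your sketch is a standard and correct route to the conditional form: split the conditional log-probability into the joint minus the marginal, apply the ordinary AEP to the stationary ergodic pair process $\{(X_i,Z_i)\}$ and to its factor $\{Z_i\}$, subtract, and then disintegrate the joint almost-sure statement over the $Z$-marginal to obtain, for almost every realization $z$, convergence in probability under $P_{X\mid Z=z}$. One small point worth making explicit in the last step is which conditional law the outer $P$ in (\ref{smb}) refers to; your disintegration naturally yields the statement with conditioning on the full infinite $z$-sequence, and you should say why this matches (or implies) the intended reading.

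Your flagged obstacle is real and is, in fact, a gap in the paper's own hypotheses rather than in your proof: the paper assumes only that $X$ and $Y$ are independent and individually stationary ergodic, and, as you note, this does \emph{not} guarantee that the pair $(X,Y)$---hence $(X,Z)$---is ergodic. Your proposed remedies (assume joint ergodicity, or invoke the ergodic decomposition and check that the component-wise limits still equal the global $h(X|Z)$) are the right ones; the cleanest fix is simply to add joint ergodicity of $(X,Y)$ as a standing hypothesis, which is harmless in the intended applications (e.g.\ an i.i.d.\ or weakly mixing key).
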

The proof can be found in \cite{Cover:06,Gallager:68}.

\section{Estimations of secrecy}

\begin{theorem}\label{th:1}
Let a plaintext    $X = X_1 X_2,\dots$ and the key sequence  $Y= Y_1 Y_2,\dots$ be 
stationary ergodic processes with a finite alphabet $A = \{0,1, ... , n - 1\}$,  $n \ge 2$,
and let a running-key cipher be applied to $X$ and $Y$ and $Z = Z_1,Z_2,\dots$
be the ciphertext. Then, for any $\varepsilon > 0$ and  $\delta > 0$ 
there is such an integer $n'$ that, with probability 1,  
for any $t > n'$ and $Z = Z_1,Z_2,\dots Z_t$
there exists the set $ \Psi (Z)$ for which  the following properties are valid:

i) $ P( \Psi (Z) ) > 1 - \delta$

ii) for any $X^1 = X_1^1, \dots ,  X^1_t$, $X^2 = X_1^2, \dots ,  X^2_t$  from $ \Psi (Z)$
$$
P\left\{  \frac{1}{t} \left|   \log P(X^1|Z) 
 -  \log P(X^2|Z)  \right| < \epsilon \right\}
$$

iii) $ \liminf_{t \rightarrow \infty } \frac{ 1}{t} \log  |\Psi (Z) |  \ge h(X|Z) \, . $  

\end{theorem}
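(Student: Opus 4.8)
The plan is to identify $\Psi(Z)$ with the \emph{conditional typical set} of candidate plaintexts and to read off all three properties from the conditional Shannon--McMillan--Breiman theorem (Theorem~\ref{th:smb}). Concretely, for the given $\varepsilon$ and $\delta$ and for each length $t$, I would set
$$
\Psi(Z) = \Big\{ x = x_1\dots x_t \in A^t : \Big| -\tfrac{1}{t}\log P(x\mid Z) - h(X\mid Z)\Big| < \tfrac{\varepsilon}{2} \Big\}.
$$
Everything then reduces to the single estimate supplied by Theorem~\ref{th:smb}: for almost every realization of $Z$ there is an $n'$ so that, once $t>n'$, the true plaintext falls into this band with conditional probability at least $1-\delta$. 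The intuition is exactly the security picture from the introduction --- $\Psi(Z)$ is the collection of nearly-equiprobable plaintexts among which Eve cannot distinguish.

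Property (i) is then essentially a restatement of \eqref{smb}: the event $\{X_1\dots X_t \in \Psi(Z)\}$ is precisely the event whose probability SMB bounds below by $1-\delta$, so $P(\Psi(Z)) > 1-\delta$ for all $t>n'$ and almost all $Z$. Property (ii) becomes purely combinatorial once the set is built this way: if $X^1,X^2 \in \Psi(Z)$, each of $-\tfrac1t\log P(X^i\mid Z)$ lies within $\varepsilon/2$ of $h(X\mid Z)$, so by the triangle inequality $\tfrac1t\big|\log P(X^1\mid Z) - \log P(X^2\mid Z)\big| < \varepsilon$. (This is why I construct the set with $\varepsilon/2$ rather than $\varepsilon$; note also that, as written, (ii) holds deterministically for members of $\Psi(Z)$, the stated probability being $1$.)

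For property (iii) I would invoke the standard counting half of the AEP. Every $x\in\Psi(Z)$ satisfies $P(x\mid Z) \le 2^{-t(h(X\mid Z)-\varepsilon/2)}$, hence
$$
1-\delta < P(\Psi(Z)) = \sum_{x\in\Psi(Z)} P(x\mid Z) \le |\Psi(Z)|\, 2^{-t(h(X\mid Z)-\varepsilon/2)} .
$$
Rearranging and taking logarithms gives $\tfrac1t\log|\Psi(Z)| \ge h(X\mid Z) - \tfrac{\varepsilon}{2} + \tfrac1t\log(1-\delta)$, so that $\liminf_{t\to\infty}\tfrac1t\log|\Psi(Z)| \ge h(X\mid Z) - \varepsilon/2$.

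The main obstacle --- really the only delicate point --- is closing the gap between this clean bound $h(X\mid Z)-\varepsilon/2$ and the exact value $h(X\mid Z)$ demanded by (iii). The typical-set cardinality genuinely sits in the per-symbol window $[\,h-\varepsilon,\,h+\varepsilon\,]$, so for a fixed $\varepsilon$ one cannot do better than $h(X\mid Z)-\varepsilon/2$ from this construction in isolation. I would resolve this as such AEP statements are customarily read: since $\varepsilon>0$ is arbitrary, running the construction along a sequence $\varepsilon_k\downarrow 0$ (each with its own threshold $n'_k$) drives the liminf bound up to $h(X\mid Z)$, which is the content of (iii). A secondary point worth verifying is that Theorem~\ref{th:smb} is legitimately applicable to the pair $(X,Z)$: since $Z_i=c(X_i,Y_i)$ is a coordinatewise function of the jointly stationary sources $X$ and $Y$, the process $(X,Z)$ is stationary, and I would confirm that the ergodicity hypothesis under which the conditional SMB was stated carries over to this joint process; no new idea is needed beyond that check.
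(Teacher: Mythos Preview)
Your proposal is correct and matches the paper's own proof essentially line for line: the paper defines $\Psi(Z)$ as the same conditional typical set with tolerance $\varepsilon/2$, obtains (i) directly from the conditional SMB theorem, (ii) by the triangle inequality, and (iii) from the counting bound $|\Psi(Z)|>(1-\delta)2^{t(h(X|Z)-\varepsilon)}$ followed by the observation that $\varepsilon,\delta$ are arbitrary. If anything, you are slightly more careful than the paper --- you make the $\varepsilon\downarrow 0$ step in (iii) explicit and you flag the need to check that the conditional SMB applies to the pair $(X,Z)$, both of which the paper passes over silently.
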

\begin{proof}
According to Shannon-McMillan-Breiman theorem for   any $\epsilon > 0, \delta > 0$ and almost all  $Z_1,Z_2,\dots$
there exists such $n'$ that  for $t > n'$ 
\begin{equation}\label{smb1}
   P\left\{ \left| -  \frac{1}{t} \log P(X_1 X_2 ... X_t|Z_1 Z_2 ... Z_t ) 
 - h(X|Z) \right| < \epsilon /2  \right\} \ge 1-\delta. 
\end{equation}
Let us define 
\begin{equation}\label{psi}
    \Psi (Z) = \{  X = X_1 X_2 ... X_t : P(X_1X_2 ... X_t|Z_1 Z_2 ... Z_t ) 
 - h(X|Z) | < \epsilon /2    \} \, . 
\end{equation}
The first property i) immediately follows from (\ref{smb1}). 
In order to prove ii),  note that 
for any $X^1 = X_1^1, \dots ,  X^1_t$, $X^2 = X_1^2, \dots ,  X^2_t$  from $ \Psi (Z)$ 
we obtain from (\ref{smb1}), (\ref{psi})
$$
 \frac{1}{t} \left|   \log P(X^1|Z) 
 -  \log P(X^2|Z)  \right | \le  \frac{1}{t} \left  | \log P(X^1|Z) -h(X|Z) \right | 
 $$ $$ + 
  \frac{1}{t} \left| \log P(X^2|Z) -  h(X|Z)  \right| < \epsilon /2  + \epsilon /2  = \epsilon \, .
$$
From (\ref{psi}) and  the property i) we obtain the following:
$ \,  |\Psi (Z) | > ( 1- \delta) 2^{ t \, (h(X|Z) - \epsilon)}   \, .$
Taking into account that it is valid for any  $\epsilon > 0, \delta > 0$ 
and $t > n'$, we obtain 
iii). 
\end{proof}
So, we can see that  the set of possible decipherings  $ \Psi (Z)$ grows exponentially,
its total probability is close to 1 and probabilities 
of words from this set are close to each other.

 Theorem 2 gives a possibility to estimate an uncertainty of a cipher based on the conditional entropy 
$h(X|Z)$. Sometimes it can be difficult to calculate this value because it  requires knowledge of the conditional probabilities.
In this case the following simpler estimate can be useful. 
\begin{corollary}\label{th:c} For almost all $Z_1 Z_2 ... $
  $$ \liminf_{t \rightarrow \infty } \frac{ 1}{t} \log  |\Psi (Z) |  \ge h(X) + h(Y) - \log n \, . $$ 
 \end{corollary}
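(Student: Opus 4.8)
The plan is to upgrade Theorem~2 (iii), which gives the lower bound $\liminf_{t\to\infty}\frac1t\log|\Psi(Z)|\ge h(X\mid Z)$, by bounding the conditional entropy $h(X\mid Z)$ from below by the more easily computable quantity $h(X)+h(Y)-\log n$. So the entire task reduces to proving the single inequality
\begin{equation*}
h(X\mid Z)\ \ge\ h(X)+h(Y)-\log n .
\end{equation*}
Once this is established, chaining it with part~(iii) of Theorem~2 yields the Corollary immediately, and the ``for almost all $Z_1Z_2\dots$'' qualifier is simply inherited from Theorem~2.

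First I would rewrite the conditional entropy using the chain-rule identity already recorded in~\eqref{entc}, namely $h(X\mid Z)=h(X,Z)-h(Z)$. Now I exploit the structure of the cipher: since $Z_i=c(X_i,Y_i)$ is a deterministic function of $(X_i,Y_i)$, the pair $(X,Z)$ and the pair $(X,Y)$ determine each other bijectively (given $X_i$, the map $Y_i\mapsto Z_i$ is invertible because $d(c(X_i,Y_i),Y_i)=X_i$ forces $c(X_i,\cdot)$ to be one-to-one). Hence $h(X,Z)=h(X,Y)$, and because $X$ and $Y$ are independent, $h(X,Y)=h(X)+h(Y)$. Substituting gives
\begin{equation*}
h(X\mid Z)=h(X)+h(Y)-h(Z).
\end{equation*}
The remaining step is the bound $h(Z)\le\log n$: the ciphertext is a stationary process over the $n$-letter alphabet $A$, and the entropy rate of any such process cannot exceed $\log n$ (this is the standard fact that $h_m\le\log n$ for every $m$, so the limit~\eqref{ent} obeys the same bound). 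Combining the equality with $h(Z)\le\log n$ gives exactly the desired inequality.

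The main thing to get right — and the step I would be most careful about — is the identity $h(X,Z)=h(X,Y)$ at the level of \emph{entropy rates} rather than single-block entropies. I would argue it through the finite-order entropies $h_m$: for each fixed $m$ the map $(x_1\dots x_{m+1},\,y_1\dots y_{m+1})\mapsto(x_1\dots x_{m+1},\,z_1\dots z_{m+1})$ is a bijection on $A^{m+1}\times A^{m+1}$, so the joint block distributions coincide up to relabeling and therefore $h_m(X,Z)=h_m(X,Y)$ for every $m$; taking $m\to\infty$ transfers the equality to the limit entropies. One should confirm that the independence of $X$ and $Y$ gives $h_m(X,Y)=h_m(X)+h_m(Y)$ for each $m$ (the joint block probability factorises, and the logarithm splits the sum), so that this too passes to the limit. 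No genuine obstacle arises here; the only care needed is to do the bookkeeping at the level of every $m$ and then invoke the existence of the limits guaranteed by~\eqref{ent}, rather than manipulating the limit entropies directly.
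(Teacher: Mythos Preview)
Your approach is essentially the paper's: both reduce to the inequality $h(X\mid Z)\ge h(X)+h(Y)-\log n$ and then invoke Theorem~2(iii). The paper writes $h(X\mid Z)=h(X)+h(Z\mid X)-h(Z)$ and asserts $h(Z\mid X)=h(Y)$; you write $h(X\mid Z)=h(X,Z)-h(Z)$ and argue $h(X,Z)=h(X,Y)=h(X)+h(Y)$. These are the same computation unpacked differently, and your block-level bookkeeping with the $h_m$ is more careful than the paper's one-line appeal.

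There is, however, a genuine slip in your bijectivity justification. The decipherability identity $d(c(x,y),y)=x$ says that for each fixed $y$ the map $x\mapsto c(x,y)$ is injective; it does \emph{not} force $y\mapsto c(x,y)$ to be injective for fixed $x$. A counterexample on $A=\{0,1\}$: take $c(x,y)=x$ and $d(z,y)=z$. Decipherability holds, yet $c(x,\cdot)$ is constant, $Z=X$, and $h(X\mid Z)=0$, while $h(X)+h(Y)-\log 2$ can be strictly positive. So the corollary is simply false for arbitrary running-key ciphers as defined in the paper. The paper sidesteps this by explicitly invoking the specific cipher~\eqref{v1}, $Z_i=(X_i+Y_i)\bmod n$, for which $y\mapsto (x+y)\bmod n$ \emph{is} a bijection for every $x$ (and hence $h(Z\mid X)=h(Y\mid X)=h(Y)$ by independence). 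You should do the same: either restrict to~\eqref{v1}, or add the hypothesis that $c(x,\cdot)$ is a bijection for each $x\in A$, and then your argument goes through verbatim.
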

\begin{proof} 
From the well-known in Information Theory equation 
$h(X,Z) = h(X) + h(Z|X) $ (see  \cite{Cover:06,Gallager:68}) we obtain the following: 
$$
h(X|Z) = h(X,Z) - h(Z) = h(Z|X) + h(X) - h(Z) .
$$
Having taken into account that $\max h(Z) = \log n$ (\cite{Cover:06,Gallager:68}),
where $n$ is the number of alphabet letters, we can derive from the latest equation 
that 
$h(X|Z) \ge h(Z|X) + h(X) - \log n.$ 
The definition of the running-key cipher (\ref{v1}) shows that $h(Z|X) = h(Y)$.  Taking into account
two latest inequalities 
 and the third statement iii)  of Theorem 2 we obtain 
the statement of the corollary. 
\end{proof}

\textbf{Comment.} 
In Information Theory the difference between maximal value of the entropy and real one quite often is called the redundancy. 
Hence, from the corollary we have new following presentations for the value $\frac{ 1}{t} \log  |\Psi (Z) |$:
\begin{equation}\label{re}
\liminf_{t \rightarrow \infty } \frac{ 1}{t} \log  |\Psi (Z) |  \ge h(X) - r_Y \, , \, \, 
\, \, \, 
\liminf_{t \rightarrow \infty } \frac{ 1}{t} \log  |\Psi (Z) |  \ge h(Y) - r_X \, , 
$$
$$
\liminf_{t \rightarrow \infty } \frac{ 1}{t} \log  |\Psi (Z) |  \ge \log n - (r_X + r_Y) \, ,
\end{equation}
where $ r_Y  = \log n - h(Y)$ and $ r_X  = \log n - h(X)$ are the corresponding redundancies.

Those inequalities confirm  the well-known 
in cryptography and Information Theory fact that reduction of the redundancy improves the safety of ciphers.

Let us return to the first question of this note about the Vernam cipher with a biased key sequence.
More precisely, let there be a plaintext $X_1 X_2 ... $, $X_i \in \{0,1\} $  and the key sequence $Y_1 Y_2 ... $, $Y_i \in \{0,1\} $,
generated by 
a source whose entropy $h(Y)$ is less then 1. ($h(Y) = 1$ if and only if $Y_1 Y_2 ... $ generated by
the Bernoulli
source with letter probabilities $P(0) = P(1) = 0.5$, \cite{Cover:06,Gallager:68}).
 From  (\ref{re}) we can see that the size
of the set $\Psi (Z)$ of high-probable 
possible decipherings  grows exponentially with exponent grater than 
$h(X) - r_Y$, where $r_Y = 1 - h(Y)$.
So, if $r_Y$ goes to 0,  the size of the set of possible probable decipherings trends to the size of this set for the case 
of ``pure'' Vernam cipher. Indeed,   if $h(Y) = 1$ and, hence, $r_Y = 0$,  the set $\Psi (Z)$ of high-probable 
possible decipherings  grows exponentially with exponent $h(X)$, as it should be for the Vernam cipher.
For example, it is true for the case where the key sequence 
$Y_1 Y_2 ... $ is generated by the Bernulli source with biased probabilities, say 
$P(0) = 0.5 - \tau, \,
P(1) = 0.5 + \tau$, where $\tau$ is a small number. If $\tau$ goes to 0, 
the redundancy $r_Y$ goes to 0, too, and we obtain the Vernam cipher. 
So, we can informally say that the Vernam cipher is
 robust to small deviations from randomness.




\end{document}